\newcommand{{\Cd}}{{\mathbb{C}^d}}
\newcommand{\bra}[1]{\mbox{$\langle #1 |$}}
\newcommand{\ket}[1]{\mbox{$| #1 \rangle$}}
\def\<{\langle}
\def\>{\rangle}
\newtheorem{Theorem}{Theorem}
\newtheorem{Lemma}{Lemma}
\newtheorem{Corollary}{Corollary}
\newtheorem{Example}{Example}
\newcommand{\beq}{\begin{equation}}
\newcommand{\eeq}{\end{equation}}
\newcommand{\bear}{\begin{eqnarray}}
\newcommand{\ear}{\end{eqnarray}}
\newcommand{\bdm}{\begin{displaymath}}
\newcommand{\edm}{\end{displaymath}}
\begin{document}
	\begin{CJK}{UTF8}{song}	
\title{ The stabilizer for $n$-qubit symmetric states }
	\author{Xian Shi}
	\affiliation{Institute of Mathematics, Academy of Mathematics and Systems Science, Chinese Academy of Sciences, Beijing 100190, China
		\\
		University of Chinese Academy of Sciences, Beijing 100049, China
		\\
		UTS-AMSS Joint Research Laboratory for Quantum Computation and Quantum Information Processing,
		Academy of Mathematics and Systems Science, Chinese Academy of Sciences, Beijing 100190, China}
	\email{shixian01@gmail.com}
	\pacs{03.67.Mn, 03.65.Ud }
	\begin{abstract}
	\indent The stabilizer group for an $n$-qubit state $\ket{\phi}$ is the set of all invertible local operators (ILO) $g=g_1\otimes g_2\otimes \cdots\otimes g_n,$ $ g_i\in \mathcal{GL}(2,\mathbb{C})$ such that $\ket{\phi}=g\ket{\phi}.$ Recently, [Gour $et$ $al.$ 2017 J. Math. Phys. (N.Y.) \textbf{58} 092204] presented that almost all $n$-qubit state $\ket{\psi}$ own a trivial stabilizer group when $n\ge 5.$ In this article, we consider the case when the stabilizer group of an $n$-qubit symmetric pure state $\ket{\psi}$ is trivial. First we show that the stabilizer group for an n-qubit symmetric pure state $\ket{\phi}$ is nontrivial when $n\le 4$. Then we present a class of $n$-qubit symmetric states $\ket{\phi}$ with trivial stabilizer group when $n\ge 5$. At last, we prove that when $m\ge 5$, almost all $n$-qubit symmetric pure states own a trivial stabilizer group, due to the results in [Gour $et$ $al.$ 2017 J. Math. Phys. (N.Y.) \textbf{58} 092204], we have that almost all $n$-qubit symmetric pure states are isolated.
	\end{abstract}
\maketitle
\end{CJK}
\section{Introduction}
\indent Quantum entanglement \cite{HHH} is a valuable resource for a variety of tasks that cannot be finished by classical resource. Among the most popular tasks are quantum teleportation \cite{BB} and quantum superdense coding \cite{BS}. Due to the importance of quantum entanglement, the classification of quantum entanglement states is a big issue for the quantum information theory.\\
\indent Entanglement theory is a resource theory with its free transformation is local operations and classical communication (LOCC). As LOCC is hard to deal with mathematically, and with the number of the parties of the quantum systems grows, the classification of all entanglement states under the LOCC restriction becomes very hard. A conventional way is to consider other operations, such as stochastic LOCC (SLOCC), local unitary operations (LU), separable operations (SEP).\\
\indent Two n-partite states $\ket{\phi}$ and $\ket{\psi}$ are SLOCC equivalent \cite{WGJ} if and only if there exists n invertible local operations (ILO) $A_i,i=1,2,\cdots,n$ such that
\begin{align}
\ket{\phi}=A_1\otimes A_2\otimes \cdots\otimes A_n\ket{\psi}.
\end{align}
Then the classification for multi-qubit pure states under SLOCC attracts much attention \cite{MBDM, CC,L,ZZH, LL, GW}. However, there are uncountable number of SLOCC inequivalent classes in $n$-qubit systems when $n\ge 4,$ so it is a formidable task to classify multipartite states under SLOCC. \\
\indent SEP is simple to describe mathematically and contains LOCC strictly, as there exists pure state transformations belonging to SEP, but cannot be achieved by LOCC. The authors in \cite{GRW} presented that the existence of transformations under separable operations between two pure states  depends largely on the stabilizer of the state. Recently, Gour $et$ $al.$ showed that almost all of the stabilizer group for 5 or more qubits pure states contains only the identity \cite{GKW}. And the authors in \cite{SW} generalized this result to n-qudit systems when $n>3,d>2.$\\
\indent Symmetric states belong to the space that is spanned by the pure states invariant under particle exchange, and there are some results done on the classifications under SLOCC limited to symmetric states \cite{ TSPM,PSM,PR,PJM,BBM}. The authors in \cite{PSM} proved if $\ket{\psi}$ and $\ket{\phi}$ are $n$-qubit symmetric pure states, and there exists n invertible operations $A_i,i=1,2,\cdots,n$ such that $\ket{\psi}=A_1\otimes A_2\otimes \cdots \otimes A_n\ket{\phi},$ then there exists an invertible matrix A such that
\begin{align}
\ket{\psi}=A^{\otimes n}\ket{\phi}.
\end{align}
Moreover, P. Migdal $et$ $al.$ \cite{PJM} generalized the results from qubit systems to qudit systems.\\
\indent In the article, we consider the problem on the stabilizer groups for $n$-qubit symmetric states. This article is organized as follows. In section \uppercase\expandafter{\romannumeral2},
we present preliminary knowledge on $n$-qubit symmetric pure states. In section \uppercase\expandafter{\romannumeral3}, we present our main results. First, we present the stabilizer group for an $n$-qubit symmetric state is nontrivial when $n\le 4$, then we present a class of $n$-qubit symmetric states whose stabilizer group is trivial when $n\ge 5$, at last, when $m= 2,$ the pure state $\ket{\psi}$ owns a nontrivial stabilizer group, when $m=4,$ there exists only one case when $\ket{\psi}$ owns a nontrivial stabilizer group, when $m\ge 5,$ the stabilizer group of almost all $n$-qubit symmetric pure state is trivial. In section \uppercase\expandafter{\romannumeral4}, we will end with a summary.
\section{Preliminary Knowledge}
\indent In this section, we will first recall the definition of symmetric states, and then we present the Majorana representation for an $n$-qubit symmetric pure state.\\
\indent A pure state $\ket{\psi}\in \mathcal{H}_2$ can be represented by a point on
a Bloch sphere geometrically as $\ket{\psi} = \cos \frac{\theta}{2}
\ket{0} + e^{i\phi} \sin\frac{\theta}{
2}
\ket{1},$ here two parameters $\theta\in[0, \pi], \phi\in [0, 2\pi).$ 
	
\begin{figure}
\centering
\includegraphics[width=60mm]{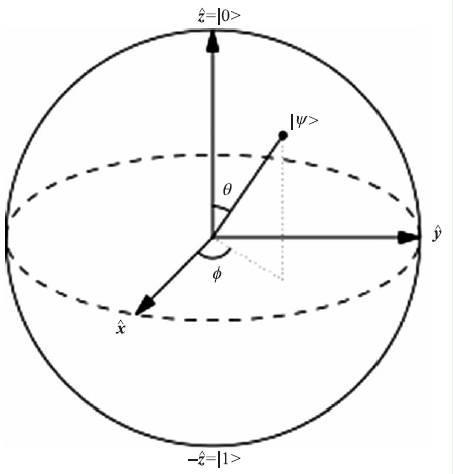}\\
\caption{Bloch Sphere}\label{Figure 1}
\end{figure}
We call an $n$-partite pure state $\ket{\psi}$ symmetric state if it is invariant under permuting the particles. That is, for any permutation
operator $P_{\pi}, P_{\pi}\ket{\psi} = \ket{\psi}.$
Generally, there are two main characterizations for an $n$-qubit symmetric pure state $\ket{\psi}$, Majorana representation \cite{M} and
Dicke representation . The Majorana representation for an n-qubit symmetric pure state is that there exists single particles $\ket{\phi_i},i=1,2,\cdots,n,$ such that
\begin{align}
\ket{\psi}=\frac{e^{i\theta}}{\sqrt{K}}\sum_{\sigma\in \textit{perm}} \ket{\phi_{\sigma(1)}} \ket{\phi_{\sigma(2)}}\cdots\ket{\phi_{\sigma(n)}},
\end{align}
where the sum runs over all distinct permutations $\sigma$, $K$ is a normalization prefactor and the $\ket{\phi_i}$ are single qubit states $\ket{\phi_i}=\cos \frac{\theta_i}{2}
\ket{0} + e^{i\phi_i} \sin\frac{\theta_i}{2}
\ket{1},i=1,2,\cdots n,$ we would denote $\ket{\phi_i}$ as $\ket{\phi_i}=a_i\ket{0}+b_i\ket{1}$ below.
 An n-qubit symmetric pure state $\ket{\psi}$ can also be characterized
as the sum of the Dicke states $\ket{D(n, k)}$, that is,
\begin{align}
\ket{\psi}=\sum x_k\ket{D(n,k)}.
\end{align} Here the Dicke states $\ket{D(n,k)}$ are defined as
\begin{align}
\ket{D(n,k)}=\frac{1}{\sqrt{C_n^k}}\sum \ket{\underbrace{0\cdot\cdot\cdot 0}_{n-k}\underbrace{1\cdot \cdot\cdot 1}_k},
\end{align}where the sum runs over all the permutations of the qubits. Up to a global phase factor, the parameters $a_i/b_i$ in a pure state $\ket{\phi_i}= a_i
\ket{0} + b_i
\ket{1}$
are
the roots of the polynomial $P(z) = \sum(-1)^k x_k\sqrt{C_n^k} z^k,$ here $C_n^k$ denotes the binomial coefficient of $n$ and $k.$  \\
\indent Next we introduce an isometric linear map
\begin{align}
\mathcal{M}: \mathcal{L}(\mathcal{H}_1,\mathcal{H}_2)\longrightarrow \mathcal{H}_1\otimes \mathcal{H}_2\\
\ket{b}\bra{a}\longrightarrow \ket{b}\ket{a}
\end{align}
here we denote that $\mathcal{L}(\mathcal{H}_1,\mathcal{H}_2)$ is the set of all linear maps from the Hilbert space $\mathcal{H}_1$ to the Hilbert space $\mathcal{H}_2.$ This map is useful for considering the stabilizer group for a $2$-qubit symmetric state. Now we introduce some properties of this map:\\
$(1).$ Assume $A\in \mathcal{L}(\mathcal{H}_1),B\in \mathcal{L}(\mathcal{H}_1),$ we have that
\begin{align}
A\otimes B\mathcal{M}(X)=\mathcal{M}(AXB^{T}).
\end{align}
$(2).$ Assume $\ket{\psi}\in\mathcal{H}_1,\ket{\phi}\in\mathcal{H}_2,$ then
\begin{align}
\mathcal{M}(\ket{\phi}\bra{\psi})=\ket{\phi}\ket{\bar{\psi}}.
\end{align}
\indent Then we recall the definition and some important properties of $M\ddot{o}bius$ transformation, which is useful for the last part of this article. $M\ddot{o}bius$ transformation is defined on the extended complex plane onto itself \cite{TN}, it can be represented as
\begin{align}
f(z)=\frac{az+b}{cz+d}, 
\end{align}            
with $a,b,c ,d\in \mathbb{C},ad-bc\ne 0.$ From the above equality, we see that when $c\ne 0,$ this function $f:\mathbb{C}-\{-d/c\}\rightarrow \mathbb{C}-\{a/c\}, f(-d/c)=\infty,f(\infty)=a/c,$ when $c=0,$ this function $f:\mathbb{C}\rightarrow\mathbb{C},f(\infty)=\infty.$ The $M\ddot{o}bius$ transformation owns the following properties:\\
$(1).$ $M\ddot{o}bius$ transformation map circles to circles.\\
$(2).$ $M\ddot{o}bius$ transformation are conformal.\\
$(3).$ If two points are symmetric with respect to a circle, then their images under a $M\ddot{o}bius$ transformation are symmetric with respect to the image circle. This is called the "Symmetry Principle."\\
$(4).$ With the exception of the identity mapping, a $M\ddot{o}bius$ transformation has at most two fixed points.\\
$(5).$ There exists a unique $M\ddot{o}bius$ transformation sending any three points to any other three points.\\
$(6).$ The unique $M\ddot{o}bius$ transformation $z\rightarrow M(z)$ sending three points $q,r,s$ to any other three points $q^{'},r^{'},s^{'}$ is given by 
\begin{align}
\frac{(M(z)-q^{'})(r^{'}-s^{'})}{(M(z)-s^{'})(r^{'}-q^{'})}=\frac{(z-q)(r-s)}{(z-s)(r-q)}.\nonumber
\end{align} 
$(7).$ The $M\ddot{o}bius$ transformation forms a group, $M\ddot{o}bius$ transformation is isotropic to the projective linear group $PSL(2,\mathbb{C})\cong SL(2,\mathbb{C})/\{I,-I\}.$\\
\indent 
As we know, the stereographic projection is a mapping that projects a sphere onto a plane. This projection is defined on the whole sphere except a point, and this map is smooth and bijective. It is conformal, $i.e.$ it preserves the angels at where curves meet. By transforming the majorana points of a pure state $\ket{\psi}$ to an extended complex plane, we may get the following proposition \cite{MW}.   
\begin{Lemma}
	Assume $\ket{\psi_1},\ket{\psi_2}$ are two pure symmetric states, if $\ket{\psi_{1}}$ and $\ket{\psi_{2}}$ are SLOCC-equivalent iff there exists a $M\ddot{o}bius$ transformation $(10)$ between their Majorana points. 
\end{Lemma}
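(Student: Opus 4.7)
The plan is to reduce the problem to the natural action of $GL(2,\mathbb{C})$ on the Majorana qubits and then translate that action, through stereographic projection, into a Möbius transformation. First, invoking the result of Mathonet \emph{et al.} \cite{PSM} quoted in the introduction, two symmetric pure states $\ket{\psi_1}$ and $\ket{\psi_2}$ are SLOCC-equivalent iff there exists a single $A\in GL(2,\mathbb{C})$ with $\ket{\psi_2}=A^{\otimes n}\ket{\psi_1}$. So the biconditional to be proved is equivalent to the statement that such an $A$ exists iff the Majorana points of $\ket{\psi_2}$ are the image of the Majorana points of $\ket{\psi_1}$ under a Möbius transformation.

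Next I would exploit the fact that $A^{\otimes n}$ preserves the symmetric subspace and acts componentwise on the Majorana form: if $\ket{\psi_1}\propto\sum_{\sigma\in\mathrm{perm}}\ket{\phi_{\sigma(1)}}\cdots\ket{\phi_{\sigma(n)}}$, then
\begin{align*}
A^{\otimes n}\ket{\psi_1}\;\propto\;\sum_{\sigma\in\mathrm{perm}}\bigl(A\ket{\phi_{\sigma(1)}}\bigr)\cdots\bigl(A\ket{\phi_{\sigma(n)}}\bigr),
\end{align*}
which is itself a Majorana expansion whose constellation is $\{A\ket{\phi_i}\}_{i=1}^n$. Writing $\ket{\phi_i}=a_i\ket{0}+b_i\ket{1}$ and the stereographic coordinate $z_i=b_i/a_i\in\mathbb{C}\cup\{\infty\}$, a direct computation with $A=\bigl(\begin{smallmatrix}\alpha&\beta\\\gamma&\delta\end{smallmatrix}\bigr)$ gives the new coordinate $z'_i=(\gamma+\delta z_i)/(\alpha+\beta z_i)$, which is exactly a Möbius transformation of the form (10) since $\det A\neq 0$. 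This establishes the forward direction.

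For the converse, suppose the Majorana points of $\ket{\psi_1}$ are sent to those of $\ket{\psi_2}$ by some Möbius map $f$. By property (7) of the Möbius group, $f$ lifts to an element of $PSL(2,\mathbb{C})$, hence to a matrix $A\in GL(2,\mathbb{C})$, and the calculation above shows that the constellation of $A^{\otimes n}\ket{\psi_1}$ agrees with that of $\ket{\psi_2}$. Uniqueness of the Majorana representation, which follows from the identification of the Majorana points with the roots of the polynomial $P(z)=\sum_k(-1)^k x_k\sqrt{C_n^k}\,z^k$ appearing in the preliminaries, then forces $A^{\otimes n}\ket{\psi_1}=\lambda\ket{\psi_2}$ for some nonzero scalar $\lambda$; replacing $A$ by $\lambda^{-1/n}A$ gives equality and thus SLOCC equivalence.

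The main obstacle I expect is bookkeeping rather than conceptual: one must handle the point at infinity carefully (Majorana qubits proportional to $\ket{1}$) and verify that the lift from $PSL(2,\mathbb{C})$ to $GL(2,\mathbb{C})$ does not obstruct the argument when $n$ is even (where $A$ and $-A$ induce the same transformation on $\ket{\psi_1}$) versus when $n$ is odd (where a sign can be absorbed into the overall scalar $\lambda$). Once the correspondence between $A^{\otimes n}$ and the induced Möbius action on the extended complex plane is made precise, the equivalence follows from the uniqueness of the Majorana constellation and the $PSL(2,\mathbb{C})\cong$ Möbius isomorphism stated in property (7).
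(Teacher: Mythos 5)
Your proof is correct. The paper itself states this lemma without proof, importing it from Aulbach's review \cite{MW}, so there is no in-house argument to compare against; your route --- reducing SLOCC equivalence to a single operator $A^{\otimes n}$ via the Mathonet \emph{et al.} result \cite{PSM}, observing that $A=\bigl(\begin{smallmatrix}\alpha&\beta\\ \gamma&\delta\end{smallmatrix}\bigr)$ acts componentwise on the constellation and sends the stereographic coordinate $z_i$ to $(\gamma+\delta z_i)/(\alpha+\beta z_i)$, and invoking uniqueness of the Majorana representation for the converse --- is the standard one, and is precisely the chain of reasoning the paper itself deploys later when it derives Theorem 2 from Eqs.~(20)--(22). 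The only point worth making explicit beyond your sketch is that the M\"obius map must carry the Majorana points of $\ket{\psi_1}$ to those of $\ket{\psi_2}$ \emph{as multisets}, i.e.\ respecting degeneracy multiplicities, which is what the uniqueness-of-roots argument actually delivers.
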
 
\indent At last, we recall two parameters defined in \cite{TSPM}, diversity degree and degeneracy configuration of an $n$-qubit symmetric pure state. Both two parameters can be used to identify the SLOCC entanglement classes of all $n$-qubit symmetric pure state. Assume $\ket{\psi}$ is an $n$-qubit symmetric pure state, $\ket{\psi}=\frac{e^{i\theta}}{\sqrt{K}}\sum_{\sigma\in \textit{perm}}\ket{\phi_{\sigma(1)}}\ket{\phi_{\sigma(2)}}\cdots\ket{\phi_{\sigma(n)}},$ up to a global phase factor, two states $\ket{\phi_i}$ and $\ket{\phi_j}$ are identical if and only if $a_ib_j-a_jb_i=0,$ and we define their number the degeneracy number. Then we define the degeneracy configuration $\{n_i\}$ of a symmetric state $\ket{\psi}$ as the list of its degeneracy numbers $n_i$ ordered in decreasing order. We denote the number of the elements in the set $\{n_i\}$ as the diversity degree $m$ of the symmetric state, it stands for the number of distinct $\ket{\phi_i}$ in the Eq.$(3).$ For example, a $3$-qubit GHZ state $\ket{GHZ}=\frac{\ket{000}+\ket{111}}{\sqrt{2}},$ we have $\ket{\phi_1}=\frac{\ket{0}+w\ket{1}}{2},\ket{\phi_2}=\frac{\ket{0}+w^2\ket{1}}{2},\ket{\phi_3}=\frac{\ket{0}+\ket{1}}{2},$ $w$ is roots of $P(z)=1-z^3,$ the degeneracy number of $\ket{GHZ}$ is 3, the degeneracy configuration of $\ket{GHZ}$ is ${\{1,1,1\}}.$  
 \section{Main Results}
\indent First we present the stabilizer group for a two-qubit symmetric pure state $\ket{\psi}\in\mathcal{H}_2\otimes \mathcal{H}_2$ is nontrivial. 
\begin{Theorem}
	Assume that $\ket{\psi}$ is a $2$-qubit symmetric pure state, $\ket{\psi}=\sum_{i=0}^2c_i\ket{D(2,i)},$ then the stabilizer group for the state $\ket{\psi}$ is nontrivial.
\end{Theorem}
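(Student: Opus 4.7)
The plan is to exploit the isometry $\mathcal{M}:\mathcal{L}(\mathcal{H}_1,\mathcal{H}_2)\to \mathcal{H}_1\otimes \mathcal{H}_2$ introduced in the preliminaries. Writing $\ket{\psi}=\sum_{i,j}\psi_{ij}\ket{i}\ket{j}$, property $(2)$ identifies $\ket{\psi}=\mathcal{M}(X)$ with $X=\sum_{i,j}\psi_{ij}\ket{i}\bra{j}\in M_2(\mathbb{C})$ (the complex conjugation in property $(2)$ is trivial on the real computational basis). Particle-exchange symmetry $\psi_{ij}=\psi_{ji}$ is then exactly the matrix symmetry $X=X^{T}$. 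By property $(1)$, the stabilizer condition $(A\otimes B)\ket{\psi}=\ket{\psi}$ translates into the matrix equation $AXB^{T}=X$ with $A,B\in\mathcal{GL}(2,\mathbb{C})$. Since $A\otimes B=I\otimes I$ forces both $A$ and $B$ to be scalar matrices whose scalars multiply to one, it will suffice to exhibit a solution in which $A$ is not a scalar multiple of $I$.

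Next I would split on the rank of $X$. Since $\ket{\psi}\neq 0$ we have $X\neq 0$, so $\mathrm{rank}(X)\in\{1,2\}$. In the rank-$2$ case $X$ is invertible, and for any chosen $A\in\mathcal{GL}(2,\mathbb{C})$ the matrix $B:=(X^{-1}A^{-1}X)^{T}$ is invertible and satisfies $AXB^{T}=X$ by direct substitution, using $X^{T}=X$. Choosing $A=\mathrm{diag}(1,2)$, or any non-scalar $A$, then gives a nontrivial stabilizer element. In the rank-$1$ case every symmetric $2\times 2$ rank-$1$ complex matrix factors as $X=vv^{T}$ for some nonzero $v\in\mathbb{C}^{2}$: writing $X=uw^{T}$, symmetry forces $u$ and $w$ to be proportional, and the resulting scalar can be absorbed using a complex square root. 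Taking $B=I$ together with any invertible $A\neq I$ having $v$ as a fixed vector — for example, the unipotent matrix $\left(\begin{smallmatrix}1 & 1\\ 0 & 1\end{smallmatrix}\right)$ in a basis beginning with $v$ — yields $AXB^{T}=(Av)v^{T}=vv^{T}=X$ and $A\otimes I\neq I\otimes I$, so the stabilizer is again nontrivial.

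I do not anticipate a serious obstacle here: once the $\mathcal{M}$ dictionary is in place, the construction is entirely elementary. The only points that deserve a careful check are (i) that particle-exchange symmetry of $\ket{\psi}$ really does correspond to $X=X^{T}$ via $\mathcal{M}$ (which boils down to the computational basis being real), and (ii) the complex symmetric rank-$1$ factorization $X=vv^{T}$, which in dimension two is a one-line argument. The proof is therefore mainly a clean application of the $\mathcal{M}$-calculus, with two easy cases distinguished by the rank of the associated symmetric matrix.
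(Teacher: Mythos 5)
Your proposal is correct, and it starts from exactly the same place as the paper: the isometry $\mathcal{M}$ turns the stabilizer condition $(A\otimes B)\ket{\psi}=\ket{\psi}$ into the matrix equation $AXB^{T}=X$ with $X=\left(\begin{smallmatrix}c_0 & c_1\\ c_1 & c_2\end{smallmatrix}\right)$ symmetric. Where you diverge is in how that equation is shown to have nonidentity solutions. The paper writes $AX=X(B^{T})^{-1}$ as four linear equations in the eight entries of $A$ and $(B^{T})^{-1}$ and concludes by a parameter count that the solution space is too large to consist only of the identity; it does not explicitly address invertibility of the resulting $A,B$ or rule out that all solutions are of the scalar form $A=\lambda I$, $B=\lambda^{-1}I$. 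You instead construct explicit nontrivial stabilizer elements by a rank dichotomy: for $\mathrm{rank}(X)=2$ take any non-scalar $A$ and $B=(X^{-1}A^{-1}X)^{T}$, and for $\mathrm{rank}(X)=1$ use the symmetric factorization $X=vv^{T}$ and a unipotent $A$ fixing $v$ with $B=I$. Both constructions check out (note the symmetry of $X$ is not even needed in the rank-$2$ case), and your route buys something real: it is constructive and closes the small gaps the dimension count leaves open, at the price of a short case analysis that the paper avoids.
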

\begin{proof}
Assume $A\in \mathcal{L}(\mathcal{H}_1)$ and $B\in\mathcal{L}(\mathcal{H}_2)$ and $A\otimes B\ket{\psi}=\ket{\psi},$ then we have $\mathcal{M}^{-1}(\ket{\psi})=\mathcal{M}^{-1}(A\otimes B\ket{\psi}),$ according to the equality $(8),$  we have that
\begin{align}
AX=X(B^{T})^{-1}.
\end{align}	
here we assume that
$
X=\left(
\begin{array}{cc}
	c_0 & c_1 \\
	c_1 & c_2 \\
\end{array}
\right),$ $A=\left(
\begin{array}{cc}
	 a&  b  \\
	 c& d \\
\end{array}
\right),B^{T}=\left(
\begin{array}{cc}
	h & -f \\
	-g &  e\\
\end{array}
\right)
$ and  $det(B)=1,$ then we can obtain the following equation set:
\begin{align}
\left\{
\begin{aligned}
c_0(c-a)=c_1(b-g)\\
c_0f-c_2b=c_1(a-h) \\
c_0c-c_2f=c_1(d-e)\\
c_1(f-c)=c_2(d-h)\\
\end{aligned}
\right  .
\end{align}
As there are four equations and eight variables, then we know that the rank of the solution vectors is more than 1, that is, the stabilizer group for the state $\ket{\psi}$ contains more than the identity.
\end{proof}
\indent Now we present a lemma to show an $n$-qubit symmetric pure state owns a nontrivial stabilizer group when $n=3,4$.
\begin{Lemma}
	Assume $\ket{\psi_{1}}$ and $\ket{\psi_2}$ are symmetric pure states, and $\ket{\psi_1}$ is SLOCC equivalent to $\ket{\psi_2},$ if there exists a nontrivial ILO $g$ such that $g^{\otimes n}\ket{\psi_{1}}=\ket{\psi_1},$ then the stabilizer group for $\ket{\psi_2}$ is nontrivial. 
\end{Lemma}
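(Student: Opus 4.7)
The plan is to transport the stabilizer element from $\ket{\psi_1}$ to $\ket{\psi_2}$ by conjugation, using the symmetry-preserving form of SLOCC equivalence proved in \cite{PSM}. Since $\ket{\psi_1}$ and $\ket{\psi_2}$ are both symmetric and SLOCC equivalent, the theorem recalled in Eq.~(2) guarantees an invertible matrix $A\in\mathcal{GL}(2,\mathbb{C})$ such that $\ket{\psi_2}=A^{\otimes n}\ket{\psi_1}$. This is the key input: SLOCC equivalence on the symmetric subspace can be realized by a single local operator acting identically on every site, which preserves the tensor-power structure needed for the stabilizer argument.

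With this in hand, I would define $h:=A\,g\,A^{-1}\in\mathcal{GL}(2,\mathbb{C})$ and verify directly that $h^{\otimes n}$ stabilizes $\ket{\psi_2}$. The computation is
\begin{align}
h^{\otimes n}\ket{\psi_2}=(AgA^{-1})^{\otimes n}A^{\otimes n}\ket{\psi_1}=A^{\otimes n}g^{\otimes n}(A^{-1})^{\otimes n}A^{\otimes n}\ket{\psi_1}=A^{\otimes n}g^{\otimes n}\ket{\psi_1}=A^{\otimes n}\ket{\psi_1}=\ket{\psi_2},
\end{align}
where I used the hypothesis $g^{\otimes n}\ket{\psi_1}=\ket{\psi_1}$ in the second-to-last step. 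Thus $h^{\otimes n}$ lies in the stabilizer group of $\ket{\psi_2}$.

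It remains to check that $h^{\otimes n}$ is a nontrivial element, i.e.\ not equal to $\oper^{\otimes n}$. Since $g^{\otimes n}\ne \oper^{\otimes n}$ by assumption and conjugation by the invertible operator $A^{\otimes n}$ is a group automorphism of $\mathcal{GL}(2^n,\mathbb{C})$, the identity $h^{\otimes n}=A^{\otimes n}g^{\otimes n}(A^{-1})^{\otimes n}$ forces $h^{\otimes n}\ne \oper^{\otimes n}$. Hence the stabilizer group of $\ket{\psi_2}$ contains a nontrivial ILO of the symmetric form $h^{\otimes n}$.

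I do not expect any serious obstacle; the only subtlety is that one must quote the Pastor--Sanpera--Migdal result (Eq.~(2)) to upgrade a generic SLOCC transformation $A_1\otimes\cdots\otimes A_n$ between the two symmetric states to a uniform $A^{\otimes n}$, since an inhomogeneous conjugation would destroy the symmetric tensor-power form required to reuse the single-site $g$. Apart from that, the argument is a one-line conjugation check.
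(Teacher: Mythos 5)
Your proof is correct and follows essentially the same route as the paper: both invoke the result of Mathonet \emph{et al.} (Eq.~(2), cited as \cite{PSM}) to write $\ket{\psi_2}=A^{\otimes n}\ket{\psi_1}$ for a single invertible $A$, and then conjugate to get $\ket{\psi_2}=A^{\otimes n}g^{\otimes n}(A^{-1})^{\otimes n}\ket{\psi_2}$; your explicit check that the conjugated element remains nontrivial is a small addition the paper leaves implicit. (Minor point: the authors of \cite{PSM} are Mathonet, Krins, Godefroid, Lamata, Solano and Bastin, not ``Pastor--Sanpera--Migdal.'')
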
  
\begin{proof}
	As $\ket{\psi_{1}}$ and $\ket{\psi_2}$ are symmetric states, then there exists an ILO $h$ such that $\ket{\psi_{2}}=h^{\otimes n}\ket{\psi_1}.$ $\ket{\psi_{2}}=h^{\otimes n}g^{\otimes n} (h^{-1})^{\otimes n}\ket{\psi_{2}},$ that is, the stabilizer group for $\ket{\psi_{2}}$ is nontrivial. 
\end{proof}
\indent In \cite{TSPM}, the authors presents a three-qubit symmetric pure state is SLOCC equivalent to $\ket{W}$ or $\ket{GHZ}.$ As we know, when we choose $g=\left(\begin{array}{cc}
exp(i\pi/4) &0\\0
& exp(-i\pi/2)\end{array}\right), g^{\otimes 3}\ket{W}=\ket{W}, \sigma^{\otimes 3}_z\ket{GHZ}=\ket{GHZ}.$ From Lemma 2, we see that the stabilizer group for all three-qubit pure symmetric states is nontrivial. And a four-qubit symmetric state is SLOCC equivalent to one of the elements in $S=\{\ket{D(4,0)},\ket{D(4,1)},\ket{D(4,2)},\frac{\ket{D(4,2)}+\ket{D(4,0)}}{2},\frac{\ket{GHZ}+\mu\ket{D(4,2)}}{\sqrt{1+\mu^2}}\},$ then we present a nontrivial stabilizer for the elements in the set $S,$ for the state $\ket{D(4,0)},$ we have $\sigma_z^{\otimes 4}\ket{D(4,0)}=\ket{D(4,0)},$ for the state $\ket{D(4,1)},$ we choose an ILO $g=\left(\begin{array}{cc}
\exp(\frac{\pi i}{6})&0\\
0&\exp(\frac{3\pi i}{2})
\end{array}\right),$ for the state $\ket{D(4,2)},$ we can choose an ILO $g=\sigma_z,$ and for the last element in the set S, we can also choose an ILO $g=\sigma_z,$ then due to the lemma 2, we have the stabilizer group for all four-qubit symmetric pure state is nontrivial. Note that for four-qubit pure states, the authors in \cite{BBM} also proposed the similar results. \\
 \indent Here we denote 
 \begin{align}
 G\equiv SL(2,\mathbb{C})\otimes SL(2,\mathbb{C})\otimes \cdots\otimes SL(2,\mathbb{C})\nonumber\\
 K\equiv SU(2)\otimes SU(2)\otimes \cdots\otimes SU(2)\nonumber\\
 \tilde{G}\equiv GL(2,\mathbb{C})\otimes GL(2,\mathbb{C})\otimes \cdots\otimes GL(2,\mathbb{C})\nonumber\\
 \tilde{K}\equiv U(2)\otimes U(2)\otimes \cdots\otimes U(2)
 \end{align}  Next we will use the method proposed in \cite{GKW} to give a class of symmetric states with its stabilizer group containing only the identity. First we introduce the definition of $SL$-invariant polynomials. A polynomial $f:\mathcal{H}\longrightarrow \mathbb{C}$ is $SL$-invariant if $f(g\ket{\psi})=f(\ket{\psi}),\forall g\in {SL}(2,\mathbb{C})\otimes {SL}(2,\mathbb{C})\cdots\otimes SL(2,\mathbb{C}),$ here we denote $\mathcal{H}=\mathcal{H}_2\otimes \mathcal{H}_2\otimes \cdots\otimes \mathcal{H}_2$ and $SL(2,\mathbb{C})=\{A=\left(\begin{array}{cc}
 a_{11}& a_{12}\\
 a_{21}& a_{22}
 \end{array}\right)| a_{ij}\in \mathcal{C},i,j=1,2,det(A)=1\}$. $f_2$ is a $SL$-invariant polynomial with degree 2, which is defined as $f_2(\ket{\psi})=(\ket{\psi},\ket{\psi})=\bra{\psi^{*}}\sigma_y^{\otimes n}\ket{\psi},$ here $\sigma_y$ is the Pauli operator with its matrix representation $\left(\begin{array}{cc}0&-i\\
 i&0\\
 \end{array}\right)$. Due to the property of $\sigma_y,$ we have that when $n$ is odd, $f_2(\cdot)=0.$ Another polynomial, $f_4(\ket{\psi}),$ is a polynomial with degree $4$, it is defined as
 $f_4(\ket{\psi})=\det\left(
 \begin{array}{cc}
 (\ket{\psi_0},\ket{\psi_0}) & (\ket{\psi_0},\ket{\psi_1}) \\
 (\ket{\psi_1},\ket{\psi_0}) &  (\ket{\psi_1},\ket{\psi_1})\\
 \end{array}
 \right),$ here we assume that $\ket{\psi}=\ket{0}\ket{\psi_0}+\ket{1}\ket{\psi_1}.$ Below we denote the stabilizer group for a pure state $\ket{\psi}$ as $\tilde{G}_{\psi}=\{g_1\otimes g_2\otimes \cdots\otimes g_n\in\tilde{G}|g_1\otimes g_2\otimes \cdots\otimes g_n\ket{\psi}=\ket{\psi}\}.$
 \begin{Example}
 	Assume $\ket{\psi}$ is an $n$-partite symmetric pure state $(n\ge 5)$, $\ket{\psi}=x_k\ket{D(n,k)}+x_l\ket{D(n,l)}+x_n\ket{D(n,n)},k+l=n+1,k,l\ne \lfloor\frac{n}{2}\rfloor, gcd(k,l)=1,$ here we denote that $gcd(k,l)$ is the greatest common divisor of $k$ and $l,$ Then $\tilde{G}_{\psi}=\{I\}$
 \end{Example}
 \begin{proof}
 	 First we denote $ g_2\otimes g_3\otimes \cdots\otimes g_n$ as $h$ and let $g_1=\left(\begin{array}{cc}
 	 a_1& b_1\\
 	 c_1& d_1\\\end{array}\right),$ then
 	\begin{align}
 	&\hspace{3mm} g_1\otimes h\ket{\psi}\nonumber\\
 	&=\ket{0}h[a_1(x_k\ket{D(n-1,k)}+x_l\ket{D(n-1,l)})+b_1(x_k\ket{D(n-1,k-1)}+x_l\ket{D(n-1,l-1)}\nonumber\\
 	&+x_n\ket{D(n-1,n-1)})]+\ket{1}h[c_1(x_k\ket{D(n-1,k)}+x_l\ket{D(n-1,l)})\nonumber\\
 	&+d_1(x_k\ket{D(n-1,k-1)}+x_l\ket{D(n-1,l-1)}+x_n\ket{D(n-1,n-1)})]\nonumber\\
 	&=\ket{0}(x_k\ket{D(n-1,k)}+x_l\ket{D(n-1,l)})\nonumber\\&+\ket{1}(x_k\ket{D(n-1,k-1)}+x_l\ket{D(n-1,l-1)}+x_n\ket{D(n-1,n-1)})
 	\end{align}
  \indent Apply $\bra{0}\otimes I^{\otimes (n-1)}$ to the left hand side (LHS) and the right hand side (RHS) of the above equality, and we denote that $\ket{\zeta_1}=x_k\ket{D(n-1,k)}+x_l\ket{D(n-1,l)},\ket{\zeta_2}=x_k\ket{D(n-1,k-1)}+x_l\ket{D(n-1,l-1)}+x_n\ket{D(n-1,n-1)},$ then
 	\begin{align}
&h[a_1\ket{\zeta_1}+b_1\ket{\zeta_2}]=x_k\ket{D(n-1,k)}+x_l\ket{D(n-1,l)}.
 	\end{align}
 \indent Assume $n$ is odd, then by using $f_2(\cdot)$ to the equality $(15),$ we have $2b_1^2x_kx_l=0,b_1=0.$ As $\ket{\psi}$ is a symmetric state, we may assume $g_i=\left(
 	\begin{array}{cc}
 	a_i & 0 \\
 	c_i &  d_i\\
 	\end{array}
 	\right),i=2,\cdots,n,$ due to $g_1\otimes g_2\otimes \cdots\otimes g_n\ket{\psi}=\ket{\psi} $ and through simple computation, we have that $c_i=0,$ Applying $g_i=\left(
 	\begin{array}{cc}
 	a_i & 0 \\ 	
 	0&  d_i\\
 	\end{array}
 	\right)$  to the equality $(14),$ we have $d_i=\lambda a_i,\prod d_i=1,\lambda^k=\lambda^l=1,$ as $gcd(k,l)=1,$ we have $\lambda=1,$ that is $\tilde{G}_{\psi}=\{I\}.$\\
 	\indent When $n$ is even, we use $f_4(\cdot)$ to the equality $(15)$, we have $2b_1^4x_k^2x_l^2=0,b_1=0.$ From the same method as when $n$ is odd, we have that $\tilde{G}_{\psi}=\{I\}$
  \end{proof}
\indent Then we present a class of symmetric critical states $\ket{\psi}$ with $G_{\psi}=\{I\}$. First we present the definition of critical states and a meaningful characterization of critical states. The set of critical states is defined as:
\begin{align}
Crit(\mathcal{H}_n)=\{\psi|\bra{\psi}X\ket{\psi}=0, X\in Lie(G)\}.
\end{align}
Here $Lie(G)$ is the Lie algebra of $G.$ The critical set is valuable, as many important states in quantum information theory, such as the Bell states, GHZ states, cluster states  and graph states, belong to the set of critical states. Then we present a fundamental characterization of critical states.
\begin{Lemma} (The Kempf-Ness theorem)\\
	1. $\psi\in Crit(\mathcal{H}_n)$ if and only if $||g\psi||\ge ||\psi||$ for all $g\in G,$ $||\cdot||$ denotes the Euclidean norm of $\ket{\psi}\in \mathcal{H}_n.$\\
	2. If $\psi\in Crit(\mathcal{H}_n),$ then $||g\psi||\ge ||\psi||$ with equality if and only if $g\psi\in K\psi.$ Moreover, if g is positive semidefinite then the equality condition holds if and only if $g\psi=\psi.$\\
	3. If $\psi\in\mathcal{H}_n,$ then $G\psi$ is closed in $\mathcal{H}_n$ if and only if $G\psi\cap Crit(\mathcal{H}_n)\ne \emptyset.$
\end{Lemma}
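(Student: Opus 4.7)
The plan is to reduce all three parts to a single convexity fact along one-parameter subgroups of positive Hermitian elements, together with the Cartan (polar) decomposition $G = K \cdot P$, where $P = \exp(i\cdot\mathrm{Lie}(K))$ is the set of positive-definite elements of $G$. The engine of the proof is the following elementary calculation: for any Hermitian $X \in \mathrm{Lie}(G)$, diagonalising $X$ and expanding $\ket{\psi} = \sum_{\lambda} c_\lambda \ket{\lambda}$ in an eigenbasis gives
\begin{align}
\varphi_X(t) := \|\exp(tX)\ket{\psi}\|^2 = \sum_\lambda e^{2t\lambda}|c_\lambda|^2,\nonumber
\end{align}
which is real-analytic, convex in $t$, with $\varphi_X'(0) = 2\bra{\psi}X\ket{\psi}$, and \emph{strictly} convex unless $X$ acts as a scalar on the support of $\ket{\psi}$.

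For part~1, split $\mathrm{Lie}(G) = \mathrm{Lie}(K)\oplus i\,\mathrm{Lie}(K)$; since $\bra{\psi}Y\ket{\psi}$ is imaginary for anti-Hermitian $Y$, the defining condition $\bra{\psi}X\ket{\psi}=0$ for all $X\in\mathrm{Lie}(G)$ reduces to the Hermitian part. If $\ket{\psi}\in Crit(\mathcal{H}_n)$, then $\varphi_X'(0) = 0$ for every Hermitian $X$, so convexity forces $t=0$ to be a minimum of $\varphi_X$; writing an arbitrary $g = k\exp(iH)\in K\cdot P$ and exploiting unitarity of $K$ yields $\|g\ket{\psi}\| = \|\exp(iH)\ket{\psi}\| \ge \|\ket{\psi}\|$. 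Conversely, if $\|g\ket{\psi}\|\ge\|\ket{\psi}\|$ for all $g$, then the identity is a minimum of $\varphi_X$ for every Hermitian $X$, so $\varphi_X'(0)=0$ and $\ket{\psi}$ is critical.

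For part~2, strict convexity supplies the rigidity. Given $\ket{\psi}\in Crit(\mathcal{H}_n)$ and $g=kp$ with $\|g\ket{\psi}\|=\|\ket{\psi}\|$, $K$-invariance forces $\|p\ket{\psi}\|=\|\ket{\psi}\|$. Writing $p=\exp(iH)$, the equality $\varphi_{iH}(1)=\varphi_{iH}(0)$ combined with $\varphi_{iH}'(0)=0$ and strict convexity forces $H$ to vanish on the support of $\ket{\psi}$, so $p\ket{\psi}=\ket{\psi}$ and $g\ket{\psi}=k\ket{\psi}\in K\ket{\psi}$. The positive-semidefinite case of the statement is recovered by specialising to $k=I$.

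For part~3, the ``only if'' direction is immediate: if $G\ket{\psi}$ is closed, then the continuous proper function $\ket{\varphi}\mapsto\|\ket{\varphi}\|^2$ attains a minimum at some $\ket{\psi_0}\in G\ket{\psi}$, and differentiating the norm along $\exp(tX)\ket{\psi_0}$ yields $\bra{\psi_0}X\ket{\psi_0}=0$, placing $\ket{\psi_0}$ in the critical set. The ``if'' direction is the main obstacle: one must show that when $\ket{\psi}$ is critical, the orbit map $G/\tilde{G}_\psi \to G\ket{\psi}$ is proper, so that no limit point can escape. The strategy is to suppose $g_n\ket{\psi}\to\ket{\varphi_\infty}$ with $\ket{\varphi_\infty}\notin G\ket{\psi}$, decompose $g_n=k_n\exp(iH_n)$, and extract a one-parameter direction $X_n=iH_n/\|H_n\|$ along which $\varphi_{X_n}$ remains bounded while $\|H_n\|\to\infty$; strict convexity together with part~2 forces $X_n$ to act trivially on $\ket{\psi}$, contradicting the assumed escape in $G/\tilde{G}_\psi$. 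Controlling this compactness argument across the non-compact symmetric space $G/K$ is the technical heart of the proof, while everything else follows from the single convexity identity above.
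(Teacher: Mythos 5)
The paper never proves this statement: it is quoted as the Kempf--Ness theorem and used as a black box, with the actual proofs living in the cited references (Wallach's GIT monograph, and the Gour--Kraus--Wallach paper). So there is no in-paper argument to compare yours against, and I can only judge the proposal on its own terms. Parts 1 and 2 are essentially complete and correct, and they follow the standard route found in those references: reduce the critical-point condition to Hermitian directions (the anti-Hermitian part contributes only an imaginary expectation), use the polar decomposition $G=K\cdot P$ together with $K$-invariance of the norm, and exploit convexity --- and its strict form --- of $t\mapsto\|\exp(tX)\ket{\psi}\|^2$ along positive one-parameter subgroups. The equality analysis in part 2 (constancy of a convex function with vanishing derivative at a global minimum forces the generator to annihilate the support of $\ket{\psi}$) is exactly right.

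The genuine gap is in part 3. The ``only if'' direction is fine. For the ``if'' direction your outline stops precisely where the real work begins. From $g_n\ket{\psi}\to\ket{\varphi_\infty}$ with $\|H_n\|\to\infty$ you can indeed extract a limiting unit Hermitian direction $X$ along which $\varphi_X$ is bounded on $[0,\infty)$, and criticality then forces $X\ket{\psi}=0$, i.e.\ $\exp(tX)\ket{\psi}=\ket{\psi}$. But this is not a contradiction: it only says the escaping directions asymptotically stabilize $\ket{\psi}$, not that $\ket{\varphi_\infty}\in G\ket{\psi}$. Since $H_n$ need not commute with its own limiting direction, you cannot simply peel off the stabilizing part of $\exp(iH_n)$ and conclude that $g_n$ stays in a bounded region of $G/\tilde{G}_{\psi}$; one still has to quotient by the stabilizer, argue by induction on orbit dimension, or invoke the Hilbert--Mumford/slice machinery to exclude escape along directions that fix $\ket{\psi}$ but move nearby orbit points. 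As written, part 3 is a plan rather than a proof; either cite Kempf--Ness (or Theorem 2.12 of Wallach's book, as the paper implicitly does) for that direction, or supply the missing properness argument in full.
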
  
\indent Due to this lemma, Gour and Wallach in \cite{GNW} showed that $\psi\in \mathcal{H}_n$ is critical if and only if all the local density matrices of $\ket{\psi}$ are proportional to the identity. And by the lemma in \cite{GKW} and the theorem 2.12 in \cite{W}, for a class of pure states $\ket{\psi}\in Crit(\mathcal{H}_n),$ if we could show the set $K_{\psi}=\{U_1\otimes U_2\otimes \cdots\otimes U_n\in K|U_1\otimes U_2\otimes \cdots\otimes U_n\ket{\psi}=\ket{\psi}\}$ contains only the identity, then $G_{\psi}=\{I\}.$
\begin{Example}
	Assume $\ket{\psi}= a_0\ket{D(n,0)}+a_{n-1}\ket{D(n,n-1)}$ or $\ket{\psi}=a_1\ket{D(n,1)}+a_n\ket{D(n,n)}$ is an $n$-qubit symmetric critical pure state with $n>4,$ then $G_{\psi}=\{I\}.$  
\end{Example} 
\begin{proof}
First we prove $K_{\psi}=\{I\}.$ 
 When $\ket{\psi}=a_1\ket{D(n,1)}+a_n\ket{D(n,n)},$ assume $U^{\otimes n}\in K$ satisfies $U^{\otimes n}\ket{\psi}=\ket{\psi},$ then 
	\begin{align}
	(U\otimes U)\rho_{1,2}(U\otimes U)^{+}=\rho_{1,2},
	\end{align} the equality $(17)$ can be changed as 
	\begin{align}
&\hspace{3mm}	(U\otimes U)[\frac{(n-2)|a_1|^2}{n}\ket{00}\bra{00}+\frac{|a_1|^2}{n}(\ket{01}+\ket{10})(\bra{01}+\bra{10})+|a_n|^2\ket{11}\bra{11}](U\otimes U)^{+}\nonumber\\=&\frac{|a_1|^2(n-2)}{n}\ket{00}\bra{00}+\frac{|a_1|^2}{{n}}(\ket{01}+\ket{10})(\bra{01}+\bra{10})+|a_n|^2\ket{11}\bra{11}
	\end{align}
	As the formula on the RHS of the equality above can be seen as a Schmidt decomposition, and $U\otimes U$ cannot increase the Schmidt rank, then $U\ket{1}=u_1\ket{1},$ as $U$ is a unitary operator, then $U\ket{0}=u_0\ket{0}.$ At last, due to $U^{\otimes n}\ket{\psi}=\ket{\psi},$ then we have $ u_0^{n-1}u_1=1,u_1^n=1,$ $i.e.$ $u_0=u_1.$ The other case is similar. 
\end{proof}
\indent  Here we present another proof on $G_{L_n}=\{I\},$ $\ket{L_n}$ is defined in the equality $(8)$ of the article \cite{GKW}.  Note that the examples above tells us $G_{\psi}=\{I\},$ however, $\tilde{G}_{\psi}\supset
\{I\}.$ It seems that this result is simple, However, this method is very useful to present nontrivial examples of states in n-qudit systems with nontrivial stabilizer groups \cite{SW}. \\
\indent At last, I would like to apply $M\ddot{o}bius$ transformation to show when the diversity number $m$ of an $n$-qubit symmetric pure state $\ket{\psi}$ is 5 or 6, the stabilizer group of $\ket{\psi}$ is trivial, when $m\ge 7,$ under a conjecture we make, the stabilizer group of $\ket{\psi}$ is trivial.\\
\indent Assume a pure symmetric state $\ket{\psi}$ can be represented in terms of Majorana representation:
\begin{align}
\ket{\psi}=\frac{e^{i\alpha}}{\sqrt{K}}\sum_{\sigma\in perm}\ket{\phi_{\sigma(1)}}\ket{\phi_{\sigma(2)}}\cdots\ket{\phi_{\sigma(n)}},
\end{align}
where the sum takes over all the permutations and K is the normalization for the state $\ket{\psi}.$ Due to the main results proposed by Mathonet $et$ $al.$ \cite{PSM}, we see that if $\ket{\psi}=g_1\otimes g_2\otimes\cdots \otimes g_n\ket{\psi},$ then there exists an ILO $g$ such that $\ket{\psi}=g^{\otimes n}\ket{\psi},$ $i.e.$ if we could prove $\{g|g^{\otimes n}\ket{\psi}=\ket{\psi}\}=\{I\},$ then the stabilizer group of $\ket{\psi}$ is trivial. From the Eq.$(19),$
\begin{align}
g^{\otimes n}\ket{\psi}=&\ket{\psi},\\
 \sum_{\sigma\in perm}\bigotimes_i g\ket{\phi_{\sigma(i)}}=&\sum_{\sigma\in perm}\bigotimes_i\ket{\phi_{\sigma(i)}}.
\end{align}
 Due to the uniqueness of the Majorana representation for a symmetric state and according to the equality (21), we see that there is a permutation $\sigma$ such that
\begin{align}
g\ket{\phi_i}=\lambda_i\ket{\phi_{\sigma(i)}}
\end{align}
with $\Pi_i \lambda_i=1.$ That is,
\begin{Theorem}
	Assume $\ket{\psi}$ is an $n$-qubit symmetric pure state, then the stabilizer group of $\ket{\psi}$ is nontrivial if and only if there exists nontrivial $M\ddot{o}bius$ transformation that permutes the Majorana roots of $\ket{\psi}.$
\end{Theorem}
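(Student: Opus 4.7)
The plan is to combine three ingredients that are already laid out in Sections II and III: (i) the result of Mathonet \textit{et al.}, which reduces an ILO stabilizer of a symmetric pure state to the form $g^{\otimes n}$; (ii) the uniqueness, up to reordering, of the Majorana decomposition; and (iii) property (7) of Section II, which identifies the group of Möbius transformations on the extended complex plane with $PSL(2,\mathbb{C})\cong SL(2,\mathbb{C})/\{\pm I\}$ via stereographic projection.

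For the forward direction, I would assume that $\tilde{G}_\psi$ contains a nontrivial operator $g_1\otimes\cdots\otimes g_n$. Applying Mathonet \textit{et al.} produces a single ILO $g$ with $g^{\otimes n}\ket{\psi}=\ket{\psi}$; since any scalar $g$ would force $g^{\otimes n}$ to be the identity operator, the resulting $g$ is necessarily non-scalar. Substituting the Majorana form into $g^{\otimes n}\ket{\psi}=\ket{\psi}$ and invoking uniqueness then yields, exactly as in the displayed equations preceding the theorem, a permutation $\tau$ and scalars $\lambda_i$ with $\prod_i\lambda_i=1$ such that $g\ket{\phi_i}=\lambda_i\ket{\phi_{\tau(i)}}$. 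Projecting each $\ket{\phi_i}$ stereographically to $z_i\in\hat{\mathbb{C}}$ translates the $g$-action into a Möbius transformation $M_g$ with $M_g(z_i)=z_{\tau(i)}$; because $g$ is non-scalar, $M_g$ is not the identity.

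For the converse, I would start from a nontrivial Möbius map $M$ permuting the Majorana roots $\{z_i\}$ and use property (7) to lift it to a matrix $\tilde g\in SL(2,\mathbb{C})$, unique up to sign. By construction, $\tilde g\ket{\phi_i}=\mu_i\ket{\phi_{\tau(i)}}$ for some scalars $\mu_i$ and permutation $\tau$; a direct computation paralleling the one already carried out in the excerpt yields $\tilde g^{\otimes n}\ket{\psi}=\bigl(\prod_i\mu_i\bigr)\ket{\psi}$. Rescaling $\tilde g$ by $\bigl(\prod_i\mu_i\bigr)^{-1/n}$ then produces $g$ with $g^{\otimes n}\ket{\psi}=\ket{\psi}$; this $g$ remains non-scalar because $M$ is nontrivial, so $\tilde{G}_\psi$ contains more than the identity operator.

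The main delicacy will be handling states whose diversity degree $m$ is strictly less than $n$, so that several $\ket{\phi_i}$ coincide. In that case the uniqueness of the Majorana decomposition preserves only the \emph{multiset} of roots, the permutation $\tau$ is defined only modulo the stabilizer of the degeneracy configuration, and one must check that any Möbius map permuting the multiset lifts to a matrix whose action on the underlying states is compatible with the multiplicities. Verifying this compatibility, together with the scaling bookkeeping needed to enforce $\prod_i\mu_i=1$, is where most of the technical care of the proof will concentrate.
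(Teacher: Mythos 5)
Your proposal follows essentially the same route as the paper, which proves the theorem via the displayed derivation immediately preceding it: the Mathonet \emph{et al.} reduction of the stabilizer to the form $g^{\otimes n}$, the uniqueness of the Majorana decomposition yielding $g\ket{\phi_i}=\lambda_i\ket{\phi_{\sigma(i)}}$ with $\prod_i\lambda_i=1$, and the correspondence between projective actions of ILOs on the Bloch sphere and M\"obius transformations (Lemma 1 and property (7)). You supply somewhat more detail than the paper does on the converse direction (lifting from $PSL(2,\mathbb{C})$ and rescaling to enforce $\prod_i\mu_i=1$) and on the degenerate-root bookkeeping, but the underlying argument is the same.
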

\indent Assume  the diversity number of a symmetric state $\ket{\psi}$ is $m$, the divergence configuration for the state $\ket{\psi}$ is $\{k_1,k_2,\cdots,k_m\}$ with $k_1\ge k_2\cdots\ge k_m$ and $\sum_i k_i=n.$ Then from the Lemma $1$ and the property $(4)$ of $M\ddot{o}bius$ transformation, we have:
\begin{Corollary}
Assume $\ket{\psi}$ is an $n$-qubit symmetric state, its degeneracy configuration is $\{k_1,k_2,\cdots,k_m\},m\ge 3,$ if there exists $k_i,k_j$ and $k_l$ such that these three values are unequal to the residual elements in the set, then the stabilizer group for the state $\ket{\psi}$ contains only the identity.
\end{Corollary}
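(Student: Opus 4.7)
The plan is to combine Theorem~2 (the just-stated characterization of a nontrivial stabilizer via Majorana permutations) with property~(4) of M\"obius transformations. By Theorem~2, the stabilizer of $\ket{\psi}$ is trivial if and only if the identity is the unique M\"obius transformation that permutes the Majorana roots of $\ket{\psi}$. So I will show: under the hypothesis, any M\"obius transformation $M$ that permutes the Majorana roots must be the identity.

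The first step is to observe that if $M$ permutes the Majorana roots, it must respect multiplicities: a Majorana point of multiplicity $k$ must be sent to a Majorana point of multiplicity $k$. This is essentially because the Majorana representation (Eq.~(3)) is unique up to reordering of the single-qubit factors, so the induced permutation of the underlying point multiset must preserve how many times each point appears. Equivalently, via the stereographic picture used in Lemma~1, the action of $g$ on $\ket{\phi_i}$ (Eq.~(22)) gives $g\ket{\phi_i}=\lambda_i\ket{\phi_{\sigma(i)}}$, so identical single-qubit factors are sent to identical single-qubit factors, which forces the multiplicity-preserving property of the induced $M$.

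The second step is the fixed-point count. Under the hypothesis there are three indices $i,j,l$ with $k_i,k_j,k_l$ each distinct from all other entries of the degeneracy configuration (and thus distinct from one another). Then each of the three values $k_i,k_j,k_l$ is attained by a single Majorana point, call them $z_i,z_j,z_l$. Since $M$ preserves multiplicities, $M$ must fix $z_i$, $z_j$, and $z_l$. By property~(4), a non-identity M\"obius transformation has at most two fixed points. Therefore $M$ is the identity, and by Theorem~2 the stabilizer group of $\ket{\psi}$ contains only the identity.

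The main obstacle, and really the only subtle point, is justifying rigorously that any M\"obius transformation induced by a stabilizing ILO must preserve multiplicities of the Majorana points. Once this is in hand, the fixed-point argument is immediate from property~(4). I would present the multiplicity-preservation step by taking Eq.~(22) as the starting point: if $g\ket{\phi_i}=\lambda_i\ket{\phi_{\sigma(i)}}$, then $\ket{\phi_i}$ and $\ket{\phi_{i'}}$ being projectively equal forces $\ket{\phi_{\sigma(i)}}$ and $\ket{\phi_{\sigma(i')}}$ to be projectively equal as well, so the permutation $\sigma$ descends to a permutation of the distinct Majorana points that preserves their multiplicities, which is exactly the multiplicity-preservation of the associated M\"obius transformation.
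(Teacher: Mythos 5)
Your proposal is correct and follows exactly the route the paper intends: the paper derives this corollary in one line from Lemma~1 (equivalently Theorem~2) together with property~(4) of M\"obius transformations, i.e.\ the at-most-two-fixed-points bound. Your added justification that the induced M\"obius transformation must preserve the multiplicities of the Majorana points (via Eq.~(22)) is the step the paper leaves implicit, and you supply it correctly.
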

\indent At last, we talk about the stabilizer group for an $n$-qubit symmetric pure state $\ket{\psi}$ with the increase of the diversity number $m$ of $\ket{\psi},$ when $m=2$ the stabilizer group for $\ket{\psi}$ is nontrivial, when $m\ge 5,$ under a conjecture we make, the stabilizer group for $\ket{\psi}$ is trivial. \\
 $\textit{m=1}$ (separable states): When a symmetric state $\ket{\psi}$ is separable, then it can be represented as
 \begin{align}
 \ket{\psi}=\ket{\epsilon\epsilon\cdots\epsilon},
 \end{align}
 it is easy to see when an ILO $g$ satisfies that $\ket{\epsilon}$ is an eignvector of $g,$ $g^{\otimes n}$ is the stabilizer for the state $\ket{\psi}.$\\
 $\textit{m=2}:$ In this case, the state $\ket{\psi}$ can be represented as
 \begin{align}
 \ket{\psi}=\frac{e^{i\theta}}{\sqrt{K}}\sum \ket{\underbrace{\epsilon_1\cdots \epsilon_1}_{k_1}\underbrace{\epsilon_2\cdots\epsilon_2}_{n-k_1}},
 \end{align}
 where the sum takes over all the permutation of $k_1$ $\ket{\epsilon_1}s$ and $n-k_1$ $\ket{\epsilon_2}s$ in this case, when $k_1\ne n-k_1,$ first we denote a local operator $h=[\ket{\epsilon_1},\ket{\epsilon_2}]^{-1},$ this means if $\ket{\epsilon_1}=a_1\ket{0}+b_1\ket{1}, \ket{\epsilon_2}=a_2\ket{0}+b_2\ket{1},$ $h=\left(\begin{array}{cc}
 a_1&a_2\\
 b_1& b_2\\
 \end{array}\right)^{-1},$ then $g$ can be written as $g=h^{-1}\left(\begin{array}{cc}\lambda_1&0\\
  0&\lambda_2\end{array}\right) h.$ And when $k_1=n-k_1,$ the ILO $g$ can also be written as $g=h^{-1}\left(\begin{array}{cc}0&\lambda_2\\
  \lambda_1&0\end{array}\right)h, h=[\ket{\epsilon_1},\ket{\epsilon_2}]^{-1}$.\\
  \indent When $m\ge 3,$ due to Lemma 3, by searching a nontrivial M$\ddot{o}$bius transformation between the Majorana points of the pure state $\ket{\psi},$ we can see whether a pure state $\ket{\psi}$ owns a nontrivial stabilizer group.\\ \indent When $m=3,$ assume the degeneracy configuration of a pure state $\ket{\psi}$ is $\{k_1,k_2,k_3\},$ due to the property (5), we see that except when $k_1\ne k_2\ne k_3$, $\ket{\psi}$ owns a nontrivial stabilizer group.\\
  \indent Here we assume that each $z_i$ is not $\infty,$ as if there exists $k$ such that $z_k=\infty,$ we can always make $z_k$ be not $\infty$ by M$\ddot{o}$bius transformation. When $m=4,$ assume the degeneracy configuration of the pure state $\ket{\psi}$ is $\{k_1,k_2,k_3,k_4\},$ when all the four number are different from each other, the stabilizer group for $\ket{\psi}$ is trivial. First we show when $k_1=k_2,k_3=k_4,$ the stabilizer group is nontrivial, $i.e.$ there exists a nontrivial M$\ddot{o}$bius transformation $f$ that can permute $z_i,i=1,2,3,4.$ Let $f(z_1)=z_2,f(z_2)=z_1,f(z_3)=z_4,f(z_4)=z_3,$ from the property (6) of the M$\ddot{o}$bius transformation, we see $f$ exists.\\
  \indent Here we note that when $\ket{\psi}$ is a four qubit symmetric pure state, $\ket{\psi}$ always owns a nontrivial stabilizer group. As the diversity configuration can only be $\{1,1,1,1\},\{2,1,1\},\{3,1\}$ or $\{4\},$ from the above analysis, we see that the stabilizer group is nontrivial.\\
  \indent Next we consider a general case. Assume the stabilizer group of the state $\ket{\psi}$ is nontrivial, then we have a nontrivial M$\ddot{o}$bius transformation which permutes the majorana points, that is, we could assume $f(z_1)=z_2,f(z_2)=z_3,\cdots,f(z_{i_1})=z_1,f(z_{i_1+1})=z_{i_1+2},f(z_{i_1+2})=z_{i_1+3},\cdots, f(z_{i_2})=z_{i_1+1}$, if $i_1,i_2-i_1\ge 3,(i_1,i_2-i_1)=1,$ here we denote that $(i_1,i_2-i_1)$ is the greatest common divisor of $i_1$ and $i_2-i_1$, then we could have $f^{i_1}(z_j)=z_j,j=1,2,\cdots ,i_1$, that is, $f^{i_1}$ is trivial, when $i_1,i_2-i_1\ge 3,(i_1,i_2-i_1)=r\ne \min(i_1,i_2-i_1)$, $f^i(z_{j})\ne z_j, j\ge i_1+1$, so it is invalid.  However, we cannot prove the case when $(i_1,i_2-i_1)=min(i_1,i_2-i_1)$ is invalid. When $i_1=2$, $i_2-i_1\ge 3,$ then we have $f(z_1)=z_2,f(z_2)=z_1,f(z_3)=z_4,f(z_4)=z_5,$
  \begin{align}
  \frac{(z_1-z_2)(z_3-z_4)}{(z_1-z_4)(z_3-z_2)}=\frac{(z_2-z_1)(z_4-z_5)}{(z_2-z_5)(z_4-z_1)},\nonumber,
  \end{align}
that is,
\begin{align}
(z_3-z_4)(z_2-z_5)=(z_3-z_2)(z_4-z_5),\nonumber\\
(z_3-z_5)(z_2-z_4)=0.\nonumber
\end{align}
This case is invalid.\\
\indent  Assume $\ket{\psi}$ is an $n$-qubit symmetric pure state with its degeneracy configuration $\{k_1,k_2,\cdots,k_m\}$, and we divide the degeneracy configuration into $d$ parts according to whether $k_i$ are equal. Assume $D$ is a set with its elements $n_j,j=1,2,\cdots, d$, here $n_j$, $j=1,2,\cdots,d$ is the number of each part. Due to the above analysis, only when the $M\ddot{o}bius$ transformation $f$ is represented as $f_{\{{r,1}\}}(r\ge 1,r\in\mathbb{Z}^{+}),f_{\{\underbrace{r,r,\cdots,r}_l\}}(r\ge 2,l\ge 1,r,l\in \mathbb{Z}^{+})$, $f_{\{\underbrace{r,r,\cdots,r}_l,1\}}(r\ge 2,l\ge 2,r,l\in \mathbb{Z}^{+})$ or $f_{\{\underbrace{r,r,\cdots,r}_l,1,1\}} (r\ge 2,l\ge 1,r,l\in \mathbb{Z}^{+})$ may present $\ket{\psi}$ own a nontrivial stabilizer group.  That is, if $n_j$, $j=1,2,\cdots,d$ cannot divide into the form of subscript of $f$, then it owns a trivial stabilizer group. For example, if the degeneracy configuration of $\ket{\psi}$ is $\{5,5,5,5,5,3,3,3,3,3,3,3,2,2,2\}$, $D=\{5,7,3\}$, this set cannot be divide into the form of the above, that is, $\ket{\psi}$ owns a trivial stabilizer group.\\
\indent As the transformations of three points can determine a unique $M\ddot{o}bius$ transformation, and when we have a Majorana point in each part, we could determine whether the above configuration is valid, if all the permutations are invalid, then the symmetric state owns a trivial stabilizer group. So when the degeneracy number $m\ge 5$, the symmetric pure state $\ket{\psi}$ with trivial stabilizer group is of full measure among the subspace of symmetric states.\\  
\indent This fact is interesting, as it implies much in the entanglement theory. As when an n-qubit symmetric pure state is entangled, it is fully entangled. Due to the results in \cite{GKW}, we have the following theorem.
  \begin{Theorem}
   Assume $\ket{\psi}$ is an n-qubit symmetric pure state with $\tilde{G}_{\psi}=\{I\}$, $\ket{\phi}$ is an n-qubit symmetric entangled state, then $\ket{\psi}$ can be converted deterministically to $\ket{\phi}$ by LOCC or SEP operations if and only if there is $u\in \tilde{K}$ such that $\ket{\psi}=u\ket{\phi}.$ 
  \end{Theorem}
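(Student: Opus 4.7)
The plan is to prove the nontrivial ``only if'' direction, since the converse is immediate: if $\ket{\psi}=u\ket{\phi}$ with $u\in\tilde{K}$, then $u$ is a local unitary, and hence a deterministic LOCC (and \emph{a fortiori} SEP) operation sending $\ket{\phi}$ to $\ket{\psi}$ (and back).

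For the forward direction, I would chain together three ingredients. First, invoke the relevant result from Gour et al.~\cite{GKW}: for an $n$-qubit pure state $\ket{\psi}$ with $\tilde{G}_{\psi}=\{I\}$, any deterministic SEP (hence LOCC) conversion $\ket{\psi}\to\ket{\phi}$ is actually implementable by local unitaries, so $\ket{\psi}=(u_1\otimes u_2\otimes\cdots\otimes u_n)\ket{\phi}$ for some $u_i\in U(2)$. This already delivers LU-equivalence, but not yet in the symmetric form the theorem demands.

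Second, since $\ket{\psi}$ and $\ket{\phi}$ are both symmetric and the $u_i$ are invertible, I would apply the theorem of Mathonet et al.~\cite{PSM} recalled in the introduction: an SLOCC equivalence between symmetric pure states via a product of ILOs can always be realized in the form $A^{\otimes n}$ for a single $A\in GL(2,\mathbb{C})$. Hence there exists an invertible $g$ with $\ket{\psi}=g^{\otimes n}\ket{\phi}$. Third, combine the two factorizations. The operator $(u_1 g^{-1})\otimes(u_2 g^{-1})\otimes\cdots\otimes(u_n g^{-1})$ stabilizes $\ket{\psi}$, and so by $\tilde{G}_{\psi}=\{I\}$ it must equal the identity operator on $\mathcal{H}_n$. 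This forces $u_i g^{-1}=\lambda_i I$ with $\prod_i \lambda_i=1$. Unitarity of each $u_i$ then gives $g^{\ast}g=|\lambda_i|^{-2}I$, which is consistent only if all $|\lambda_i|$ coincide and $g$ is a scalar multiple of some unitary $V\in U(2)$; the normalization $\|\ket{\psi}\|=\|\ket{\phi}\|=1$ pins the remaining scalar to a phase, which can be absorbed into $V$. The outcome is $\ket{\psi}=V^{\otimes n}\ket{\phi}$ with $u:=V^{\otimes n}\in\tilde{K}$, which is the required conclusion.

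The main obstacle I anticipate is not the algebra above but confirming that the Gour--Kraus--Wallach ``SEP implies LU under trivial stabilizer'' implication applies verbatim here. That statement is proved in \cite{GKW} for generic pure states of the full space $\mathcal{H}_n$, whereas we are restricting both source and target to the symmetric subspace. The hypothesis $\tilde{G}_{\psi}=\{I\}$, however, is taken in the full group $\tilde{G}$ rather than in any symmetric subgroup, and the SEP implementation acts on the whole Hilbert space, so the reduction goes through without modification; this is the point worth spelling out carefully in the write-up.
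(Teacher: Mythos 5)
Your proposal is correct and takes essentially the same route as the paper, which offers no proof of this theorem beyond the citation ``Due to the results in \cite{GKW}'': the entire content is the Gour--Kraus--Wallach result that a state with trivial stabilizer $\tilde{G}_{\psi}=\{I\}$ admits only local-unitary deterministic SEP (hence LOCC) conversions, exactly your first ingredient, and your caveat about restricting to the symmetric subspace is resolved just as you say. Your second and third steps (upgrading $u_1\otimes u_2\otimes\cdots\otimes u_n$ to a symmetric $V^{\otimes n}$ via Mathonet \emph{et al.} and the trivial-stabilizer argument) are correct but prove more than the statement requires, since $\tilde{K}$ is defined in Eq.~(13) as the full group $U(2)\otimes U(2)\otimes\cdots\otimes U(2)$ rather than its diagonal subgroup.
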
 
\indent From the above theorem, we have if two states satisfy the conditions in the above theorem, if they are not LU equivalent, they cannot be transformed into other by LOCC, even SEP. And from the statement below the theorem 2, we have that when $m\ge 5$, almost all symmetric pure states $\ket{\psi}$ are isolated. This fact may represent the complexity of the structure of the multipartite entanglement.\\
\indent From the above results, we see that when two symmetric pure states satisfy conditions in the above theorems, a state can be transformed into the other by local transformations only with probability. At last, we present the maximal probability with which $\ket{\psi}\rightarrow \ket{\phi}$ can be converted by LOCC by using the theroem 7 in \cite{GKW}.
\begin{Theorem}
	Assume $\ket{\psi}$ is a symmetric pure state with $\tilde{G}_{\ket{\psi}}=\{I\}$, $\ket{\phi}=g^{\otimes n}\ket{\psi}$ is a symmetric pure state, then the maximal probability with which $\ket{\psi}$ can be converted to $\ket{\phi}$ by LOCC or SEP is given by
	\begin{align}
		p_{max}(\ket{\psi}\rightarrow\ket{\phi})=\frac{1}{\lambda_{max}({g^{\dagger}}^{\otimes n}g^{\otimes n})},
	\end{align}
	here we denote that $\lambda_{max}(X)$ is the maximal eigenvalue of the matrix $X$.
\end{Theorem}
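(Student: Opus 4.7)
The plan is to apply Theorem 7 of \cite{GKW} directly, matching our symmetric data against its hypotheses. That result asserts that whenever $\tilde{G}_{\ket{\psi}}=\{I\}$ and $\ket{\phi}=M\ket{\psi}$ for some $M\in\tilde{G}$, the maximal LOCC (equivalently SEP) conversion probability is $p_{max}(\ket{\psi}\rightarrow\ket{\phi})=1/\lambda_{max}(M^{\dagger}M)$. The present hypotheses supply precisely both ingredients: the stabilizer is trivial by assumption, and the explicit ILO is $M=g^{\otimes n}$ coming from the relation $\ket{\phi}=g^{\otimes n}\ket{\psi}$. So the theorem reduces to substituting this $M$ into the GKW formula and simplifying.

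Before substituting, I would verify that $M=g^{\otimes n}$ is the unique ILO in $\tilde{G}$ sending $\ket{\psi}$ to $\ket{\phi}$, which guarantees the right-hand side is unambiguous. If some other $M'\in\tilde{G}$ also satisfied $M'\ket{\psi}=\ket{\phi}$, then $(g^{\otimes n})^{-1}M'\ket{\psi}=\ket{\psi}$, so $(g^{\otimes n})^{-1}M'\in\tilde{G}_{\ket{\psi}}=\{I\}$, forcing $M'=g^{\otimes n}$. This is exactly where the trivial stabilizer hypothesis does its job: it removes the gauge freedom that could otherwise change $\lambda_{max}(M^{\dagger}M)$. The compatibility of the symmetric form $g^{\otimes n}$ with an a priori fully local ILO is the content of Mathonet $et$ $al.$ \cite{PSM}, already invoked earlier in the paper.

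Next, I would perform the direct computation. Since the Hermitian adjoint distributes over tensor products,
\begin{align}
M^{\dagger}M=(g^{\otimes n})^{\dagger}(g^{\otimes n})=(g^{\dagger})^{\otimes n}g^{\otimes n},\nonumber
\end{align}
and substituting into the GKW formula gives the claimed identity. (As a sanity check one may further observe that this equals $(g^{\dagger}g)^{\otimes n}$, whose largest eigenvalue is $\lambda_{max}(g^{\dagger}g)^{n}$, so the denominator has the expected tensor-power scaling.)

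The only real point of care, rather than a substantive obstacle, is a bookkeeping check that the ambient group in which \cite{GKW} states its Theorem 7 coincides with $\tilde{G}=GL(2,\mathbb{C})^{\otimes n}$ as defined here, so that our triviality hypothesis matches theirs and the LOCC/SEP equivalence in their optimization carries over verbatim. Once this is confirmed, the result is an immediate combination of \cite{GKW} with the tensor-product calculation above, with no further technical difficulty.
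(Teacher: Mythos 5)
Your proposal is correct and follows essentially the same route as the paper, which simply invokes Theorem 7 of \cite{GKW} with $M=g^{\otimes n}$ and offers no further argument. Your added checks (uniqueness of the ILO via the trivial stabilizer, and $(g^{\otimes n})^{\dagger}g^{\otimes n}=(g^{\dagger})^{\otimes n}g^{\otimes n}$) are sensible bookkeeping that the paper leaves implicit.
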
  
  
   \section{Conclusion}
\indent In this article, we consider the stabilizer group for a symmetric state $\ket{\psi}.$ First we present that the stabilizer group for an $n$-qubit symmetric state $\ket{\psi}$ contains more than the identity when $n=2,3,4$, then similar to the method presented in \cite{GKW}, we give a class of states whose stabilizer group contains only the identity, we also propose a class of states $\ket{\psi}$ with $G_{\psi}=\{I\},\tilde{G}_{\psi}\ne \{I\},$ at last, we present that when $m\ge 5$, almost all $n$-qubit symmetric pure states owns a trivial stabilizer group, then due to the results in \cite{GKW}, we have that almost all $n$-qubit symmetric pure states cannot be transformed among symmetric pure entangled states by nontrivial LOCC transformations deterministically, and we also present the optimal probability of the local transformations among two SLOCC-equivalent symmetric states.  
\section{ACKNOWLEDGMENTS}
	\indent This project was partially supported by the National Key Research and Development Program of China under grant 2016YFB1000902, the Natural Science Foundation of China grants 61232015 and 61621003, the Knowledge Innovation Program of the Chinese Academy of Sciences (CAS), and Institute of Computing Technology of CAS.

\end{document}